\documentclass[12pt,oneside]{amsart}

\usepackage{amsthm,amsmath,amssymb,amsfonts,amscd,mathtools,,upgreek,url,
}
\usepackage{amsaddr}

\DeclareMathOperator{\Tr}{Tr}
\newcommand{\uv}{\mathrm{u}}
\newcommand{\sop}{{\mathbf{S}}}

\newcommand{\tdt}{\mathbf{D}_t}

\newcommand{\ilk}{\ell}
\newcommand{\iup}{i}
\newcommand{\nt}{n}
\newcommand{\km}{k}

\newcommand{\ord}{\mathrm{p}}

\newcommand{\mm}{\mathbf{M}}
\newcommand{\hmm}{\widehat{\mm}}

\newcommand{\gbf}{\mathbf{G}}

\newcommand{\dd}[1]{\partial_{#1}}

\newcommand{\ws}[1]{\mathbb{W}_{#1}}
\newcommand{\tws}[1]{\widetilde{\mathbb{W}}_{#1}}

\newcommand{\hs}[1]{\mathbb{B}_{#1}}

\newcommand{\eo}[1]{\boldsymbol\Delta^{#1}}

\newcommand{\ud}{\uv_\ilk,\ldots}

\newcommand{\una}{\upalpha}
\newcommand{\unb}{\upbeta}

\newcommand{\fn}{f}

\newcommand{\la}{\uplambda}
\newcommand{\nb}{\nabla}
\newcommand{\tnb}{\widetilde{\nabla}}

\newcommand{\feq}{\mathrm{F}}

\newcommand{\feqd}{\mathbb{F}}
\newcommand{\ixd}{\mathrm{m}}

\newcommand{\sm}{\mathrm{d}}

\newcommand{\lb}{\label}
\newcommand{\er}{\eqref}

\newcommand{\zp}{\mathbb{Z}_{\ge 0}}
\newcommand{\zsp}{\mathbb{Z}_{>0}}
\newcommand{\zz}{\mathbb{Z}}

\newcommand{\cl}{\colon}
\newcommand{\pd}{\partial}
\newcommand{\lwt}{\mathbf{W}}

\newcommand{\diunw}{N}

\newcommand{\pw}{\mathrm{q}}
\newcommand{\ki}{k}
\newcommand{\rpw}{r}

\newcommand{\uxd}{u_{\ixd x},\ldots}
\newcommand{\azc}{\mathbf{A}}
\newcommand{\bzc}{\mathbf{B}}

\newcommand{\ce}[1]{\mathbf{C}^{#1}}

\newtheorem{theorem}{Theorem}
\newtheorem{proposition}{Proposition}
\newtheorem{lemma}{Lemma}

\theoremstyle{definition}
\newtheorem{definition}{Definition}
\newtheorem{example}{Example}

\newtheorem{remark}{Remark}

\begin{document}

\title[The gauge action on semi-discrete Lax representations and its invariants]%
{The gauge action on semi-discrete Lax representations \\
and its invariants}
\date{}

\author{Sergei Igonin} 
\address{Center of Integrable Systems, P.G. Demidov Yaroslavl State University, Yaroslavl, Russia}
\email{s-igonin@yandex.ru}

\subjclass[2020]{37K60, 37K35}

\begin{abstract}
Semi-discrete (differential-difference) matrix Lax representations (Lax pairs)
play an essential role in the theory of integrable differential-difference equations.

Fix a (1+1)-dimensional evolutionary differential-difference (semi-discrete) equation 
and consider matrix Lax representations (MLRs) of this equation.
Two MLRs are said to be gauge equivalent if one of them can be obtained from the other 
by applying a (local) matrix gauge transformation.
Gauge transformations (GTs) form an infinite-dimensional group, 
which acts on the set of MLRs of a given equation.
Two MLRs are gauge equivalent iff they belong to the same orbit of this action.

When one tries to establish integrability (in the sense of soliton theory)
for a given equation, one is interested in MLRs which depend on a parameter 
(usually called the spectral parameter)
such that the parameter cannot be removed by any GT.

We introduce and study explicit 
invariants with respect to the action of GTs on the set of MLRs 
for a given (1+1)-dimensional evolutionary differential-difference equation 
with any number of components.
Using these invariants, we obtain the following results:

\begin{itemize}
	\item Consider a MLR with a parameter~$\la$.
	If at least one of the invariants computed for this MLR depends nontrivially on~$\la$, 
then the parameter cannot be removed by any GT.
\item When we have two different MLRs for a given equation,
we present necessary conditions for these two MLRs to be gauge equivalent.
\end{itemize}

Our results on semi-discrete MLRs of differential-difference equations 
are inspired by results of S.Yu.~Sakovich~\cite{sakov95} and 
M.~Marvan~\cite{marvan93,marvan97} on 
(continuous) zero-curvature representations of partial differential equations.
A comparison with some of the results of S.Yu.~Sakovich and M.~Marvan is presented.
\end{abstract}

\maketitle

\section{Introduction}
\subsection{An overview of the results}
\lb{sbint}

It is well known that (1+1)-dimensional 
semi-discrete (differential-difference) matrix Lax representations (Lax pairs)
play an essential role in the theory of (1+1)-dimensional (nonlinear) integrable differential-difference equations
(see, e.g.,~\cite{HJN-book2016,kmw,LWY-book2022,suris2003} and references therein).
This preprint is a continuation of a research project on 
such matrix Lax representations (MLRs) with the action of matrix gauge transformations.
The previous works~\cite{IgSimpl2024,IgJGP2025,ChisIg2025,IgMLR26} of this project
include results on simplifications of MLRs by means of gauge transformations
as well as applications of MLRs and gauge transformations 
to constructing new (1+1)-dimensional integrable differential-difference equations (with new MLRs) 
connected by new Miura-type transformations 
to known integrable differential-difference equations possessing MLRs.

Fix a (1+1)-dimensional evolutionary differential-difference (semi-discrete) equation 
and consider MLRs of this equation.
Two MLRs are said to be gauge equivalent if one of them can be obtained from the other 
by applying a (local) matrix gauge transformation.
Gauge transformations (GTs) form an infinite-dimensional group, 
which acts on the set of MLRs of a given equation.
Two MLRs are gauge equivalent if and only if they belong to the same orbit of this action.
(See Definition~\ref{dmlpgtnew} for more details.)

When one tries to establish integrability (in the sense of soliton theory)
for a given equation, one is interested in MLRs which depend on a parameter 
(usually called the spectral parameter)
such that the parameter cannot be removed by any GT.
(See Definition~\ref{dprem} on this matter.)

In this preprint we introduce and study explicit 
invariants with respect to the action of GTs on the set of MLRs 
for a given (1+1)-dimensional evolutionary differential-difference equation
with any number of components.
These invariants are given by formulas~\er{inv}, 
using the notation from Definitions~\ref{dmlpgtnew},~\ref{dwd}.
As shown in Theorems~\ref{thdet},~\ref{thpar} and in Remark~\ref{rnec} 
in Section~\ref{ssdmlr}, using these invariants, we obtain the following results.
\begin{itemize}
	\item Consider a MLR with a parameter~$\la$.
	If at least one of the invariants~\er{inv} computed for this MLR depends nontrivially on~$\la$, 
then the parameter cannot be removed by any GT.
\item When we have two different MLRs for a given equation,
we present necessary conditions for these two MLRs to be gauge equivalent.
\end{itemize}
In Section~\ref{ssdmlr} we present the general theory and illustrate it 
by Examples~\ref{exy1}--\ref{exy3}.

Our results on semi-discrete MLRs of differential-difference equations 
are inspired by results of S.Yu.~Sakovich~\cite{sakov95} and 
M.~Marvan~\cite{marvan93,marvan97} on 
(continuous) zero-curvature representations of partial differential equations (PDEs).
A comparison with some of the results of S.Yu.~Sakovich 
and M.~Marvan is presented in Section~\ref{szcr}.

This preprint will be updated at \url{arxiv.org},
in order to include more results and examples.

\subsection{Some conventions and notation}
\lb{sbnot}

The following conventions and notation are used in this preprint.
\begin{itemize}
  \item Matrix-valued functions are sometimes called simply matrices.
	\item All considered matrices are of size $\sm\times\sm$ for some $\sm\in\zsp$.
	\item MLR = matrix Lax representation for a differential-difference (semi-discrete) equation.
	\item ZCR = zero-curvature representation for a partial differential equation.
	\item GT = gauge transformation (given by an invertible matrix).
	\item $\zsp$ and $\zp$ are the sets of positive and nonnegative integers, respectively.
\end{itemize}

\section{Results on semi-discrete matrix Lax representations for differential-difference equations}
\lb{ssdmlr}

Let $\diunw\in\zsp$. Below we consider a  differential-difference equation 
for an $\diunw$-component vector-function 
$\uv=\big(\uv^1(n,t),\ldots,\uv^{\diunw}(n,t)\big)$
of an integer variable~$n$ and a real or complex variable~$t$.
For any fixed integer~$\ilk$, 
one has $\uv_\ilk=(\uv^1_\ilk,\ldots,\uv^{\diunw}_\ilk)$,
where for each $\iup=1,\ldots,\diunw$ 
the component $\uv_\ilk^\iup$ is a function
of~$n,t$ defined as follows $\uv_\ilk^\iup(n,t):=\uv^\iup(n+\ilk,t)$.
That is, $\uv_\ilk(n,t)=\uv(n+\ilk,t)$. In particular, $\uv_0=\uv$.

Let $\una,\unb\in\zz$, $\,\una\le\unb$.
Consider a (1+1)-dimensional evolutionary differential-difference (semi-discrete) equation
\begin{gather}
\lb{sddenew}
\pd_t(\uv)=\feq(\nt,\uv_\una,\uv_{\una+1},\ldots,\uv_\unb),
\end{gather}
where $\feq$ is an $\diunw$-component vector-function $\feq=(\feq^1,\ldots,\feq^{\diunw})$, 
which may depend on~$n$ and on~$\uv_\ilk$ for $\ilk=\una,\una\!+\!1,\ldots,\unb$.

The differential-difference equation \eqref{sddenew} 
is equivalent to the following infinite set of differential equations
\begin{gather}
\lb{infcol}
\pd_t\big(\uv(n,t)\big)=\feq\big(\nt,\uv(n+\una,t),\uv(n+\una+1,t),\ldots,\uv(n+\unb,t)\big),\qquad\quad n\in\zz.
\end{gather}
Using the components of $\uv=\big(\uv^1(n,t),\ldots,\uv^{\diunw}(n,t)\big)$ 
and of $\feq=(\feq^1,\ldots,\feq^{\diunw})$, we can rewrite~\eqref{sddenew} as
\begin{gather}
\notag
\pd_t\big(\uv^i\big)=\feq^i(\nt,\uv_\una,\uv_{\una+1},\ldots,\uv_\unb),\qquad\quad
i=1,\ldots,\diunw.
\end{gather}
For each fixed $\ilk\in\zz$, replacing $n$ by $n+\ilk$ in~\er{infcol}, 
we see that \er{sddenew} yields 
\begin{gather}
\lb{uildde}
\pd_t\big(\uv^i_\ilk\big)=\feq^i(n+\ilk,\uv_{\una+\ilk},\uv_{\una+1+\ilk},\ldots,\uv_{\unb+\ilk}),
\qquad\quad
i=1,\ldots,\diunw,\qquad \ilk\in\zz.
\end{gather}

We use the formal theory of (1+1)-dimensional evolutionary differential-difference equations, where one regards 
\begin{gather}
\lb{uldiunew}
n,\qquad\qquad
\uv_\ilk=(\uv^1_\ilk,\ldots,\uv^\diunw_\ilk),\quad\ilk\in\zz,
\end{gather}
as independent quantities.
Note that $\uv_\ilk=(\uv^1_\ilk,\ldots,\uv^\diunw_\ilk)$
are sometimes called \emph{dynamical variables}.
In what follows, the notation of the type $\fn=\fn(\nt,\ud)$ means 
that a function~$\fn$ depends on a finite number of the variables~\er{uldiunew}.



We denote by~$\sop$ the \emph{shift operator} with respect to~$n$.
For any function $h=h(n,t)$, one has the function~$\sop(h)$
such that $\sop(h)(n,t)=h(n+1,t)$.
Furthermore, for each $k\in\zz$, we have the $k$th power~$\sop^k$ 
of the operator~$\sop$ so that $\sop^k(h)(n,t)=h(n+k,t)$.

Since $\uv_\ilk$ corresponds to $\uv(n+\ilk,t)$, the operator~$\sop$ 
and its powers~$\sop^k$ for $k\in\zz$ act on functions of
the variables~\er{uldiunew} by means of the rules
\begin{gather}
\lb{csuf}
\begin{gathered}
\sop(n)=n+1,\qquad\sop(\uv_\ilk)=\uv_{\ilk+1},\qquad
\sop^k(n)=n+k,\qquad\sop^k(\uv_\ilk)=\uv_{\ilk+k},\\
\sop^k\big(\fn(\nt,\uv_\ilk,\ldots)\big)=\fn(n+k,\sop^k(\uv_{\ilk}),\ldots).
\end{gathered}
\end{gather}
That is, applying $\sop^k$ to a function $\fn=\fn(\nt,\uv_\ilk,\ldots)$, 
we replace~$n$ by~$n+k$ in~$\fn$ 
and replace~$\uv_\ilk^{\iup}$ by~$\uv_{\ilk+k}^{\iup}$ in~$\fn$ for all $\ilk\in\zz$,
$\,\iup=1,\ldots,\diunw$.

The \emph{total derivative operator~$\tdt$ corresponding to equation~\eqref{sddenew}} 
acts on functions of the variables~\er{uldiunew} as follows
\begin{gather}
\lb{dtfunw}
\tdt\big(\fn(\nt,\ud)\big)=\sum_{\substack{i=1,\ldots,\diunw,\\ \ilk\in\zz}}\sop^\ilk(\feq^\iup)\cdot\frac{\pd \fn}{\pd \uv_\ilk^\iup},
\end{gather}
where $\feq^\iup=\feq^\iup(\nt,\uv_\una,\ldots,\uv_\unb)$ are the components of the vector-function 
$\feq=(\feq^1,\ldots,\feq^\diunw)$ from~\eqref{sddenew}.
Formula~\er{dtfunw} reflects the chain rule for the derivative with respect to~$t$, 
taking into account~\er{uildde}.
From~\er{csuf},~\er{dtfunw} it follows that, for any function $g=g(\nt,\ud)$,
we have $\tdt\big(\sop(g)\big)=\sop\big(\tdt(g)\big)$.


\begin{definition}
\lb{dmlpgtnew}
Let $\sm\in\zsp$. 
Let $\mm=\mm(\nt,\ud,\la)$ and $\lwt=\lwt(\nt,\ud,\la)$ be $\sm\times\sm$ matrix-functions
depending on the variables~\er{uldiunew} and a complex parameter~$\la$.
Suppose that the matrix~$\mm$ is invertible and one has
\begin{gather}
\lb{lrnw}
\tdt(\mm)=\sop(\lwt)\cdot\mm-\mm\cdot\lwt,
\end{gather}
where $\tdt$ is given by~\eqref{dtfunw}.
Then the pair $(\mm,\lwt)$ is called a 
\emph{semi-discrete matrix Lax representation} (MLR) for equation~\eqref{sddenew}.


The connection between~\eqref{lrnw} and~\eqref{sddenew} is as follows.
The components~$\feq^\iup$, $\,i=1,\ldots,\diunw$, 
of the right-hand side $\feq(\nt,\uv_\una,\ldots,\uv_\unb)$ of~\eqref{sddenew} 
emerge in formula~\er{dtfunw} for the operator~$\tdt$, which is used in~\er{lrnw}.

Relation~\er{lrnw} implies that the following (overdetermined) auxiliary linear system
\begin{gather}
\notag
\begin{aligned}
\sop(\Psi)&=\mm\cdot\Psi,\\
\pd_t(\Psi)&=\lwt\cdot\Psi
\end{aligned}
\end{gather}
is compatible modulo equation~\eqref{sddenew}. 
Here $\Psi=\Psi(n,t)$ is an invertible $\sm\times\sm$ matrix-function.

A \emph{gauge transformation} (GT) is an invertible $\sm\times\sm$ 
matrix-function $\gbf=\gbf(\nt,\ud,\la)$.
If $(\mm,\lwt)$ is a MLR for~\eqref{sddenew}, then the matrices 
\begin{gather}
\lb{ggtmtu}
\widehat\mm=\sop(\gbf)\cdot \mm\cdot\gbf^{-1},\qquad\quad
\widehat\lwt=\tdt(\gbf)\cdot\gbf^{-1}+\gbf\cdot\lwt\cdot\gbf^{-1}
\end{gather}
form a MLR for equation~\eqref{sddenew} as well.
The MLR $\big(\widehat \mm,\,\widehat\lwt\big)$ is \emph{gauge equivalent} 
to the MLR $(\mm,\lwt)$, and one can say that
$\big(\widehat \mm,\,\widehat\lwt\big)$ is obtained from $(\mm,\lwt)$ 
by means of the gauge transformation~$\gbf$.

GTs~$\gbf=\gbf(\nt,\ud,\la)$ form an infinite-dimensional group 
with respect to the multiplication of invertible $\sm\times\sm$ matrices.
Formulas~\er{ggtmtu} determine an action of the group of GTs
on the set of MLRs for a given equation~\eqref{sddenew}.
Two MLRs are gauge equivalent if and only if they belong to the same 
orbit of this action, which is called the \emph{gauge action}.
\end{definition}

\begin{remark}
\lb{reminv}
Definition~\ref{dmlpgtnew} says that, in any MLR $\big(\mm(\nt,\ud,\la),\,\lwt(\nt,\ud,\la)\big)$,
the matrix~$\mm(\nt,\ud,\la)$ is required to be invertible.
However, in some examples of MLRs~$(\mm,\lwt)$, 
it may happen that the matrix~$\mm=\mm(\nt,\ud,\la)$ is invertible for almost all (but not all) 
values of~$\nt,\ud,\la$.
If there are some (exceptional) points $(\nt,\ud,\la)$ where the matrix~$\mm(\nt,\ud,\la)$ 
is singular, such points are excluded from consideration in this preprint.
\end{remark}

\begin{definition}
\lb{dtrmlp}
A MLR~$(\mm,\lwt)$ for equation~\er{sddenew} is said to be \emph{trivial} 
if the matrix~$\mm$ does not depend on~$\uv_\ilk$ for any $\ilk\in\zz$.
Then \er{dtfunw} gives $\tdt(\mm)=0$, and from~\er{lrnw} 
one obtains $\sop(\lwt)=\mm\cdot\lwt\cdot\mm^{-1}$, which 
implies that $\lwt$ does not depend on~$\uv_\ilk$ either.
Hence a trivial MLR does not provide any information about equation~\er{sddenew}.
\end{definition}

Fix $\sm\in\zsp$. 
Definition~\ref{dwd} is taken from~\cite{IgMLR26}.

\begin{definition}
\lb{dwd}
Let $\ws{\sm}$ be the vector space of all $\sm\times\sm$ matrix-functions of the form $Q=Q(\nt,\ud,\la)$.
Let $\hs{\sm}\subset\ws{\sm}$ be the subset of all invertible $\sm\times\sm$ matrix-functions $M=M(\nt,\ud,\la)$.
For each $M\in\hs{\sm}$, we consider the operator $\nb_{M}\cl\ws{\sm}\to\ws{\sm}$ given by the formula
\begin{gather}
\lb{nbmq}
\nb_{M}(Q)=\sop(M^{-1}QM).
\end{gather}
The operator $\nb_{M}$ is invertible, and one has
\begin{gather}
\lb{nbm1q}
\nb_{M}^{-1}(Q)=M\cdot\sop^{-1}(Q)\cdot M^{-1}.
\end{gather}
For each $k\in\zz$, we denote by $\nb_{M}^k$ the $k$th power 
of the operator~$\nb_{M}$. In particular, one has
\begin{gather*}
\nb_{M}^0(Q)=Q,\qquad\quad
\nb_{M}^2(Q)=\nb_{M}(\nb_{M}(Q))=\sop(M^{-1}\sop(M^{-1}QM)M),\\
\nb_{M}^{-2}(Q)=\nb_{M}^{-1}(\nb_{M}^{-1}(Q))=M\sop^{-1}(M\sop^{-1}(Q)M^{-1})M^{-1}.
\end{gather*}
Furthermore, for each $j=1,\ldots,\diunw$, we consider the map $\eo{\uv^j}\cl\hs{\sm}\to\ws{\sm}$ given by
\begin{gather}
\lb{foreo}
\eo{\uv^j}(M)=\sum_{\ilk\in\zz}\nb_M^{-\ilk}\big(\dd{\uv^j_\ilk}(M)\cdot M^{-1}\big),\qquad\quad
M=M(\nt,\ud,\la)\in\hs{\sm},\qquad
j=1,\ldots,\diunw.
\end{gather}
For example, 
if $M=M(\nt,\uv^i_{-1},\uv^i_{0},\uv^i_{1},\la)$ depends only on~$\nt$, $\uv^i_{-1}$, 
$\uv^i_{0}$, $\uv^i_{1}$, $\,i=1,\ldots,\diunw$,  and on~$\la$, then 
\begin{multline*}
\eo{\uv^j}\big(M(\nt,\uv^i_{-1},\uv^i_{0},\uv^i_{1},\la)\big)=
\sum_{\ilk\in\zz}\nb_M^{-\ilk}\big(\dd{\uv^j_\ilk}(M)\cdot M^{-1}\big)=
\sum_{\ilk=-1,0,1}\nb_M^{-\ilk}\big(\dd{\uv^j_\ilk}(M)\cdot M^{-1}\big)=\\
=
\sop\big(M^{-1}\big(\dd{\uv^j_{-1}}(M)\cdot M^{-1}\big)M\big)
+\dd{\uv^j_0}(M)\cdot M^{-1}+M\sop^{-1}\big(\dd{\uv^j_1}(M)\cdot M^{-1}\big)M^{-1}.
\end{multline*}
\end{definition}

Lemma~\ref{lem1} is proved in~\cite{IgMLR26}.
\begin{lemma}[\cite{IgMLR26}]
\label{lem1}
Let $M=M(\nt,\ud,\la)$, $\,\gbf=\gbf(\nt,\ud,\la)$ 
be invertible $\sm\times\sm$ matrix-functions 
and $P=P(\nt,\ud,\la)$ be an arbitrary $\sm\times\sm$ matrix-function.
Set $\widetilde{M}=\sop(\gbf)\cdot M\cdot\gbf^{-1}$.
Then 
\begin{gather}
\lb{tmm}
\nb_{\widetilde{M}}^{\km}\big(\sop(\gbf)\cdot P\cdot \sop(\gbf)^{-1}\big)=
\sop(\gbf)\cdot \nb^{\km}_M(P)\cdot \sop(\gbf)^{-1}\qquad\quad
\forall\,{\km}\in\mathbb{Z}.
\end{gather}
\end{lemma}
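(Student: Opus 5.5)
The plan is to prove \er{tmm} for $\km=1$ and $\km=-1$ directly from the definitions \er{nbmq}, \er{nbm1q}, and then pass to all $\km\in\zz$ by induction. Write $A=\sop(\gbf)$ and let $C_A$ denote conjugation, $C_A(P)=A\cdot P\cdot A^{-1}$. Then \er{tmm} says exactly that
\begin{gather*}
\nb_{\widetilde M}^{\km}\circ C_A=C_A\circ\nb_M^{\km}.
\end{gather*}
If this intertwining identity holds for $\km=1$, then composing it with itself gives it for all $\km\ge 1$. Inverting both sides of the $\km=1$ identity gives $C_A^{-1}\circ\nb_{\widetilde M}^{-1}=\nb_M^{-1}\circ C_A^{-1}$, that is, $\nb_{\widetilde M}^{-1}\circ C_A=C_A\circ\nb_M^{-1}$, which then iterates to all negative $\km$. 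The case $\km=0$ is trivial. Here we use that $\nb_M$, $\nb_{\widetilde M}$ and $C_A$ are invertible linear operators on $\ws{\sm}$, which follows from \er{nbm1q}.

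For $\km=1$ the computation is as follows. We have $\widetilde M^{-1}=\gbf\, M^{-1}\sop(\gbf)^{-1}$. Hence
\begin{gather*}
\widetilde M^{-1}\cdot\sop(\gbf)P\sop(\gbf)^{-1}\cdot\widetilde M
=\gbf M^{-1}\sop(\gbf)^{-1}\sop(\gbf)P\sop(\gbf)^{-1}\sop(\gbf)M\gbf^{-1}
=\gbf\, M^{-1}PM\,\gbf^{-1}.
\end{gather*}
Applying $\sop$, which is multiplicative on products of matrix-functions by \er{csuf}, gives
\begin{gather*}
\sop(\gbf)\cdot\sop(M^{-1}PM)\cdot\sop(\gbf)^{-1}=\sop(\gbf)\cdot\nb_M(P)\cdot\sop(\gbf)^{-1},
\end{gather*}
which is \er{tmm} for $\km=1$.

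The only point needing care is the extension to negative $\km$. I would present it via the operator-inverse argument above rather than by a separate computation with \er{nbm1q}, though a direct check with \er{nbm1q} for $\km=-1$ is an equally short alternative. No genuine obstacle is expected. The whole proof is a cancellation together with the fact that $\sop$ is a ring automorphism commuting with matrix inversion.
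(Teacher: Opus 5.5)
Your proof is correct. The $\km=1$ cancellation using $\widetilde M^{-1}=\gbf M^{-1}\sop(\gbf)^{-1}$ and the multiplicativity of $\sop$ is right. Extending to all $\km\in\zz$ by iterating the intertwining relation $\nb_{\widetilde M}\circ C_A=C_A\circ\nb_M$, and by inverting it, is also valid, because $\nb_M$, $\nb_{\widetilde M}$ and $C_A$ are all invertible on $\ws{\sm}$.

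The paper does not prove this lemma itself; it defers to \cite{IgMLR26}. So there is no proof here to compare against, and your argument is a complete direct verification. Your careful treatment of negative powers is exactly what is needed later: the proof of Theorem~\ref{th1} applies $\nb_{\widetilde M}^{-\ilk}$ for every $\ilk\in\zz$.
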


Theorem~\ref{th1} is presented in~\cite{IgMLR26} with a sketch of proof.
Here we give a detailed proof for it.
\begin{theorem}
\label{th1}
For any invertible $\sm\times\sm$ matrix-functions
\begin{gather}
\notag
M=M(\nt,\ud,\la),\qquad\quad G=G(\nt,\ud,\la)
\end{gather}
we have
\begin{gather}
\label{gmg}
\eo{\uv^j}\big(\sop(G)\cdot M\cdot G^{-1}\big)=\sop(G)\cdot\eo{\uv^j}(M)\cdot\sop(G)^{-1}
\qquad\quad\forall\,j=1,\ldots,\diunw.
\end{gather}
\end{theorem}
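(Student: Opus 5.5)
The plan is to compute $\eo{\uv^j}(\widetilde M)$ for $\widetilde M=\sop(G)\cdot M\cdot G^{-1}$ term by term from the definition~\er{foreo}, and to use Lemma~\ref{lem1} to move the operators $\nb^{-\ilk}_{\widetilde M}$ past conjugation by $\sop(G)$. The product rule gives
\begin{gather*}
\dd{\uv^j_\ilk}(\widetilde M)\cdot\widetilde M^{-1}
=\dd{\uv^j_\ilk}(\sop(G))\sop(G)^{-1}
+\sop(G)\big(\dd{\uv^j_\ilk}(M)M^{-1}\big)\sop(G)^{-1}
-\sop(G)M G^{-1}\dd{\uv^j_\ilk}(G)G^{-1}M^{-1}\sop(G)^{-1}.
\end{gather*}
Write these three terms as $A_\ilk+B_\ilk-C_\ilk$. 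We then sum $\nb^{-\ilk}_{\widetilde M}$ of each over $\ilk\in\zz$. All sums are finite, since $M$ and $G$ depend on finitely many variables.

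\textbf{The middle terms.} By Lemma~\ref{lem1} with $P=\dd{\uv^j_\ilk}(M)M^{-1}$ and $\km=-\ilk$,
\begin{gather*}
\nb^{-\ilk}_{\widetilde M}(B_\ilk)=\sop(G)\,\nb^{-\ilk}_M\big(\dd{\uv^j_\ilk}(M)M^{-1}\big)\,\sop(G)^{-1}.
\end{gather*}
Summing over $\ilk$ gives exactly the right-hand side of~\er{gmg}. It therefore remains to show that $\sum_\ilk\nb^{-\ilk}_{\widetilde M}(A_\ilk-C_\ilk)=0$.

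\textbf{The remaining terms.} Let $H_\ilk:=\dd{\uv^j_\ilk}(G)G^{-1}$. Then $C_\ilk=\widetilde M H_\ilk\widetilde M^{-1}=\nb^{-1}_{\widetilde M}(\sop(H_\ilk))$ by~\er{nbm1q}. Since $\sop$ sends $\uv_\ilk$ to $\uv_{\ilk+1}$, we have $\dd{\uv^j_{\ilk}}(\sop(G))=\sop\big(\dd{\uv^j_{\ilk-1}}(G)\big)$, and hence $A_\ilk=\sop(H_{\ilk-1})$. Consequently
\begin{gather*}
\nb^{-\ilk}_{\widetilde M}(A_\ilk)=\nb^{-\ilk}_{\widetilde M}\big(\sop(H_{\ilk-1})\big),\qquad
\nb^{-\ilk}_{\widetilde M}(C_\ilk)=\nb^{-(\ilk+1)}_{\widetilde M}\big(\sop(H_\ilk)\big).
\end{gather*}
After reindexing $\ilk\mapsto\ilk+1$ in the first sum, the two sums coincide term by term. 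Their difference is a telescoping cancellation that vanishes exactly, because only finitely many $H_\ilk$ are nonzero. This proves~\er{gmg}.

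The main obstacle is bookkeeping in this last step. One must verify the identity $\widetilde M X\widetilde M^{-1}=\nb^{-1}_{\widetilde M}(\sop X)$ and the commutation $\dd{\uv^j_\ilk}\circ\sop=\sop\circ\dd{\uv^j_{\ilk-1}}$ (from~\er{csuf}) carefully, so that the shift indices line up and the cancellation is exact rather than off by one.
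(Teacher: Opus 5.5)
Your proposal is correct and is essentially the paper's proof (product rule, Lemma~\ref{lem1}, index-shift cancellation). The one difference is that you cancel the $G$-terms directly under $\nb_{\widetilde M}$ using $H_\ilk=\dd{\uv^j_\ilk}(G)G^{-1}$, whereas the paper first pulls $\sop(G)$ out of all three terms and cancels under $\nb_M$ with $G^{-1}\dd{\uv^j_\ilk}(G)$. One slip needs fixing: the last term of your product-rule display has a stray $G^{-1}$ and should read $-\sop(G)MG^{-1}\dd{\uv^j_\ilk}(G)M^{-1}\sop(G)^{-1}$, since that is the expression your identity $C_\ilk=\widetilde M H_\ilk\widetilde M^{-1}$ actually equals.
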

\begin{proof}
Let $\widetilde{M}=\sop(G)MG^{-1}$.
To prove~\er{gmg}, we need to show that 
$\eo{\uv^j}\big(\widetilde{M}\big)=\sop(G)\cdot\eo{\uv^j}(M)\cdot\sop(G)^{-1}$.

Using the relations
\begin{gather*}
\widetilde{M}^{-1}=G M^{-1}\sop(G)^{-1},\qquad 
\dd{\uv^j_\ilk}(G^{-1})=-G^{-1}\dd{\uv^j_\ilk}(G)G^{-1},\qquad
\dd{\uv^j_\ilk}\big(\sop(G)\big)=\sop\big(\dd{\uv^j_{\ilk-1}}(G)\big),\\
\sop\big(\dd{\uv^j_{\ilk-1}}(G)\big)=\sop(G)\sop\big(G^{-1}\dd{\uv^j_{\ilk-1}}(G)\big),\qquad
MG^{-1}\dd{\uv^j_\ilk}(G)M^{-1}=\nb_{M}^{-1}\big(\sop\big(G^{-1}\dd{\uv^j_\ilk}(G)\big)\big),
\end{gather*}
we get
\begin{multline}
\lb{tmm1}
\dd{\uv^j_\ilk}\big(\widetilde{M}\big)\cdot\widetilde{M}^{-1}=\dd{\uv^j_\ilk}\big(\sop(G) MG^{-1}\big)
G M^{-1}\sop(G)^{-1}=\\
=\Big(\dd{\uv^j_\ilk}\big(\sop(G)\big)MG^{-1}
+\sop(G)\dd{\uv^j_\ilk}(M)G^{-1}+
\sop(G) M\dd{\uv^j_\ilk}(G^{-1})\Big)G M^{-1}\sop(G)^{-1}=\\
=\Big(\dd{\uv^j_\ilk}\big(\sop(G)\big)MG^{-1}G M^{-1}+
\sop(G)\dd{\uv^j_\ilk}(M)G^{-1}G M^{-1}
-\sop(G) MG^{-1}\dd{\uv^j_\ilk}(G)G^{-1}
G M^{-1}\Big)\sop(G)^{-1}=\\
=\Big(\sop\big(\dd{\uv^j_{\ilk-1}}(G)\big)+\sop(G)\dd{\uv^j_\ilk}(M)M^{-1}
-\sop(G)MG^{-1}\dd{\uv^j_\ilk}(G)M^{-1}\Big)\sop(G)^{-1}=\\
=\sop(G)\Big(\sop\big(G^{-1}\dd{\uv^j_{\ilk-1}}(G)\big)+\dd{\uv^j_\ilk}(M)M^{-1}
-\nb_{M}^{-1}\big(\sop\big(G^{-1}\dd{\uv^j_\ilk}(G)\big)\big)\Big)\sop(G)^{-1}.
\end{multline}
Using~\er{tmm1} and Lemma~\ref{lem1}, one derives
\begin{multline*}
\eo{\uv^j}\big(\widetilde{M}\big)=
\sum_{\ilk\in\zz}\nb_{\widetilde{M}}^{-\ilk}\Big(\dd{\uv^j_\ilk}\big(\widetilde{M}\big)\cdot\widetilde{M}^{-1}\Big)=\\
=\sum_{\ilk\in\zz}\nb_{\widetilde{M}}^{-\ilk}\Big(\sop(G)\Big(\sop\big(G^{-1}\dd{\uv^j_{\ilk-1}}(G)\big)+
\dd{\uv^j_\ilk}(M)M^{-1}
-\nb_{M}^{-1}\big(\sop\big(G^{-1}\dd{\uv^j_\ilk}(G)\big)\big)\Big)\sop(G)^{-1}\Big)=\\
=\sop(G)\Big(\sum_{\ilk\in\zz}\nb_M^{-\ilk}\Big(\sop\big(G^{-1}\dd{\uv^j_{\ilk-1}}(G)\big)+
\dd{\uv^j_\ilk}(M)M^{-1}
-\nb_{M}^{-1}\big(\sop\big(G^{-1}\dd{\uv^j_\ilk}(G)\big)\big)\Big)\Big)\sop(G)^{-1}=\\
=\sop(G)\Big(\sum_{\ilk\in\zz}\nb_M^{-\ilk}\big(\dd{\uv^j_\ilk}(M)M^{-1}\big)+
\sum_{\ilk\in\zz}\nb_M^{-\ilk}\big(\sop\big(G^{-1}\dd{\uv^j_{\ilk-1}}(G)\big)\big)
-\sum_{\ilk\in\zz}\nb_M^{-(\ilk+1)}\big(\sop\big(G^{-1}\dd{\uv^j_\ilk}(G)\big)\big)\Big)\sop(G)^{-1}=\\
=\sop(G)\Big(\sum_{\ilk\in\zz}\nb_M^{-\ilk}\big(\dd{\uv^j_\ilk}(M)M^{-1}\big)\Big)\sop(G)^{-1}=
\sop(G)\cdot\eo{\uv^j}(M)\cdot\sop(G)^{-1}.
\end{multline*}
\end{proof}

The statement of Lemma~\ref{lab} is a simple result from linear algebra.
\begin{lemma}
\lb{lab}
Let $A$, $B$ be $\sm\times\sm$ matrices.
Suppose that there is an invertible matrix~$C$ such that 
\begin{gather}
\notag
B=CAC^{-1}.
\end{gather}
Then 
\begin{gather}
\lb{dettr}
\det(B)=\det(A),\qquad\quad\Tr(B^\pw)=\Tr(A^\pw)\qquad\forall\,\pw\in\zsp.
\end{gather}
\end{lemma}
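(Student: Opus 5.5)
The plan is to prove the two identities in~\er{dettr} separately, using only two elementary facts: the multiplicativity of the determinant and the cyclicity of the trace, $\Tr(XY)=\Tr(YX)$ for any $\sm\times\sm$ matrices $X$, $Y$.

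For the determinant I will use multiplicativity directly:
\begin{gather*}
\det(B)=\det(C)\det(A)\det(C^{-1}).
\end{gather*}
Since $\det(C)\det(C^{-1})=\det(CC^{-1})=\det(I)=1$, and scalars commute, this gives $\det(B)=\det(A)$.

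For the traces, I first show by induction on $\pw\in\zsp$ that $B^\pw=CA^\pw C^{-1}$. The case $\pw=1$ is the hypothesis. For the inductive step,
\begin{gather*}
B^{\pw+1}=B^\pw B=CA^\pw C^{-1}CAC^{-1}=CA^{\pw+1}C^{-1}.
\end{gather*}
Then cyclicity of the trace yields
\begin{gather*}
\Tr(B^\pw)=\Tr\big(C\cdot A^\pw C^{-1}\big)=\Tr\big(A^\pw C^{-1}\cdot C\big)=\Tr(A^\pw).
\end{gather*}

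There is no real obstacle here. The only point worth noting is that in the paper's setting $A$, $B$, $C$ are matrix-functions of $\nt,\ud,\la$, not constant matrices. The argument above is purely algebraic, so it applies pointwise at every point where $C$ is invertible; by Remark~\ref{reminv}, the exceptional points are excluded from consideration. Alternatively, one can view the entries as elements of a commutative ring, where the same identities hold. This is exactly the form in which the lemma will be combined with Theorem~\ref{th1}, taking $C=\sop(G)$.
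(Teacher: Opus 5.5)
Your proof is correct: multiplicativity of $\det$, the induction $B^{\pw}=CA^{\pw}C^{-1}$, and cyclicity of the trace are exactly the standard linear-algebra facts the paper has in mind. The paper gives no proof and simply calls the lemma a simple result from linear algebra, so your argument, including the pointwise remark for matrix-functions, supplies the omitted details.
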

In what follows, we consider MLRs for a given equation~\er{sddenew}.
\begin{theorem}
\lb{thdet}
For each MLR $\big(\mm(\nt,\ud,\la),\,\lwt(\nt,\ud,\la)\big)$, we consider
\begin{gather}
\label{inv}
\det\big(\eo{\uv^j}(\mm)\big),\qquad
\Tr\big(\big(\eo{\uv^j}(\mm)\big)^\pw\big),\qquad\quad
j=1,\ldots,\diunw,\qquad\pw\in\zsp.
\end{gather}
The functions~\er{inv} are invariant with respect to the action 
of GTs on the set of MLRs for a given equation~\er{sddenew},
in the following sense. 
Consider a MLR $\big(\mm(\nt,\ud,\la),\,\lwt(\nt,\ud,\la)\big)$, 
a  gauge transformation $\gbf=\gbf(\nt,\ud,\la)$, and the corresponding gauge equivalent MLR
\begin{gather}
\notag
\hmm=\sop(\gbf)\cdot\mm\cdot\gbf^{-1},\qquad\quad
\widehat\lwt=\tdt(\gbf)\cdot\gbf^{-1}+\gbf\cdot\lwt\cdot\gbf^{-1}.
\end{gather}
Then we have 
\begin{gather}
\lb{dettr1}
\begin{gathered}
\det\big(\eo{\uv^j}\big(\hmm\big)\big)=\det\big(\eo{\uv^j}(\mm)\big),\qquad\quad
\Tr\big(\big(\eo{\uv^j}\big(\hmm\big)\big)^\pw\big)=\Tr\big(\big(\eo{\uv^j}(\mm)\big)^\pw\big),\\
j=1,\ldots,\diunw,\qquad\quad\pw\in\zsp.
\end{gathered}
\end{gather}
\end{theorem}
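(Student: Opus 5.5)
The plan is to combine Theorem~\ref{th1} with Lemma~\ref{lab}, applied pointwise in the variables $\nt,\ud,\la$. First I apply Theorem~\ref{th1} with $M=\mm$ and $G=\gbf$. Both are invertible $\sm\times\sm$ matrix-functions: $\mm$ because $(\mm,\lwt)$ is a MLR, and $\gbf$ by the definition of a GT. Since $\hmm=\sop(\gbf)\cdot\mm\cdot\gbf^{-1}$, formula~\er{gmg} yields
\begin{gather*}
\eo{\uv^j}\big(\hmm\big)=\sop(\gbf)\cdot\eo{\uv^j}(\mm)\cdot\sop(\gbf)^{-1},\qquad\quad j=1,\ldots,\diunw.
\end{gather*}
Note that the matrix $\widehat\lwt$ plays no role here; only the transformation law for $\mm$ matters.

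Next I fix $j$ and set $A=\eo{\uv^j}(\mm)$, $B=\eo{\uv^j}(\hmm)$, and $C=\sop(\gbf)$. The matrix $C$ is invertible with $C^{-1}=\sop(\gbf^{-1})$, because $\sop$ is a ring automorphism of the algebra of functions of~\er{uldiunew}, being just a substitution of variables. So $B=CAC^{-1}$. The conclusion of Lemma~\ref{lab} rests only on the multiplicativity of $\det$ and on $\Tr(XY)=\Tr(YX)$, and both hold identically for matrices with entries in a commutative ring of functions (or pointwise at each point $(\nt,\ud,\la)$). It therefore gives $\det(B)=\det(A)$ and $\Tr(B^\pw)=\Tr(A^\pw)$ for all $\pw\in\zsp$, which is exactly~\er{dettr1}.

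There is no real obstacle, since Theorem~\ref{th1} has already done the computational work. The only point needing care is that $\eo{\uv^j}(\mm)$ is a well-defined matrix-function. The sum in~\er{foreo} is finite because $\mm$ depends on finitely many $\uv_\ilk$. In addition, exceptional points where $\mm$ or $\gbf$ is singular are excluded as in Remark~\ref{reminv}, so the pointwise application of Lemma~\ref{lab} is legitimate.
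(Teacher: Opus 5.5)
Your proposal is correct and follows the paper's proof exactly: Theorem~\ref{th1} with $M=\mm$, $G=\gbf$ gives the conjugation $\eo{\uv^j}(\hmm)=\sop(\gbf)\cdot\eo{\uv^j}(\mm)\cdot\sop(\gbf)^{-1}$, and Lemma~\ref{lab} with $C=\sop(\gbf)$ then yields~\eqref{dettr1}. Your additional remarks (invertibility of $\sop(\gbf)$, finiteness of the sum in~\eqref{foreo}, and that $\widehat\lwt$ plays no role) are accurate and are left implicit in the paper.
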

\begin{proof}
Since $\widehat\mm=\sop(\gbf)\cdot\mm\cdot\gbf^{-1}$, 
property~\er{gmg} from Theorem~\ref{th1} gives 
\begin{gather}
\notag
\eo{\uv^j}\big(\widehat{\mm}\big)=\sop(\gbf)\cdot\eo{\uv^j}(\mm)\cdot\sop(\gbf)^{-1}
\qquad\quad\forall\,j=1,\ldots,\diunw,
\end{gather}
which allows us to apply Lemma~\ref{lab} to
\begin{gather}
\notag
A=\eo{\uv^j}(\mm),\qquad\quad
B=\eo{\uv^j}\big(\widehat{\mm}\big),\qquad\quad C=\sop(\gbf).
\end{gather}
Then \er{dettr} gives~\er{dettr1}.
\end{proof}
\begin{remark}
\lb{rnec}
For a given equation~\er{sddenew}, 
consider a MLR $\big(\mm(\nt,\ud,\la),\,\lwt(\nt,\ud,\la)\big)$
and \mbox{another} MLR 
$\big(\hmm(\nt,\ud,\la),\,\widehat{\lwt}(\nt,\ud,\la)\big)$.
Theorem~\ref{thdet} implies the following.
Equations~\er{dettr1} are necessary conditions for these two MLRs 
to be gauge equivalent.
\end{remark}

\begin{definition}
\lb{dprem}
Consider a MLR given by matrices $\mm=\mm(\nt,\ud,\la)$ and $\lwt=\lwt(\nt,\ud,\la)$
depending on the variables~\er{uldiunew} and a complex parameter~$\la$.

We say that \emph{the parameter~$\la$ can be removed from this MLR 
by a gauge transformation}
if there is a gauge transformation $\gbf=\gbf(\nt,\ud,\la)$ such that 
in the corresponding gauge equivalent MLR
\begin{gather}
\lb{hmhu}
\hmm=\sop(\gbf)\cdot \mm\cdot\gbf^{-1},\qquad\quad
\widehat\lwt=\tdt(\gbf)\cdot\gbf^{-1}+\gbf\cdot\lwt\cdot\gbf^{-1}
\end{gather}
the matrices $\hmm$, $\widehat\lwt$ do not depend on~$\la$.
\end{definition}

\begin{theorem}
\lb{thpar}
Consider a MLR given by matrices $\mm=\mm(\nt,\ud,\la)$ and $\lwt=\lwt(\nt,\ud,\la)$.
Suppose that at least one of the functions~\er{inv} depends nontrivially on~$\la$.
Then the parameter~$\la$ cannot be removed from this MLR by any GT.
\end{theorem}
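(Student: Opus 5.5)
We argue by contradiction, using Theorem~\ref{thdet} directly. Suppose $\la$ can be removed, so by Definition~\ref{dprem} there is a gauge transformation $\gbf=\gbf(\nt,\ud,\la)$ for which the matrices $\hmm$, $\widehat\lwt$ in~\er{hmhu} do not depend on~$\la$. We will compare the invariants~\er{inv} of the two MLRs $(\mm,\lwt)$ and $(\hmm,\widehat\lwt)$.

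The first step is to check that every function in~\er{inv} computed from~$\hmm$ is independent of~$\la$. This follows from the construction of $\eo{\uv^j}$ in Definition~\ref{dwd}:
\begin{itemize}
\item The partial derivatives $\dd{\uv^j_\ilk}(\hmm)$, the inverse $\hmm^{-1}$ and the shifts $\sop^{\pm1}$ act only on $\nt$ and the variables~\er{uldiunew}.
\item Hence each operator $\nb_{\hmm}^{-\ilk}$, built from~\er{nbmq} and~\er{nbm1q}, sends a $\la$-independent matrix to a $\la$-independent matrix.
\end{itemize}
So every summand in~\er{foreo} for $M=\hmm$ is $\la$-independent. As in the example after~\er{foreo}, the sum is finite because $\hmm$ depends on finitely many $\uv_\ilk$, so $\eo{\uv^j}(\hmm)$ itself does not depend on~$\la$. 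Consequently $\det\big(\eo{\uv^j}(\hmm)\big)$ and $\Tr\big(\big(\eo{\uv^j}(\hmm)\big)^\pw\big)$ do not depend on~$\la$, for all $j=1,\ldots,\diunw$ and $\pw\in\zsp$.

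The second step applies Theorem~\ref{thdet}. Since $(\hmm,\widehat\lwt)$ is obtained from $(\mm,\lwt)$ by the GT~$\gbf$, equalities~\er{dettr1} hold. Each function in~\er{inv} for $\mm$ therefore equals the corresponding function for $\hmm$, which is $\la$-independent. This contradicts the hypothesis that at least one of the functions~\er{inv} for $\mm$ depends nontrivially on~$\la$. Hence no GT removes~$\la$.

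There is no real obstacle here. The only point that needs care is the first step: that $\la$-independence of $\hmm$ carries over to $\eo{\uv^j}(\hmm)$. This requires that the finiteness of the sum in~\er{foreo} and the operators $\nb^{\pm1}_{\hmm}$ do not involve~$\la$, which is clear from the formulas. Independence of $\widehat\lwt$ from $\la$ is not needed; only that of $\hmm$ is used. Following Remark~\ref{reminv}, any points where $\mm$ or $\hmm$ is singular are excluded, so the identities are understood on the common domain.
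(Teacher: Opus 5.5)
Your proposal is correct and follows the paper's argument: assume a GT removes $\la$, observe that the invariants~\er{inv} computed from $\hmm$ are then $\la$-independent, and use the equalities~\er{dettr1} from Theorem~\ref{thdet} to transfer this to the invariants of $\mm$, which is a contradiction. The paper states without comment that $\la$-independence of $\hmm$ passes to $\eo{\uv^j}(\hmm)$; your justification of this step (the sum in~\er{foreo} is finite and the operators $\nb_{\hmm}^{\pm1}$ do not involve $\la$) fills in that detail.
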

\begin{proof}
Suppose that $\la$ can be removed from this MLR by some gauge transformation $\gbf=\gbf(\nt,\ud,\la)$.
According to Definition~\ref{dprem}, this means that 
the matrices $\hmm$, $\widehat\lwt$ given by~\er{hmhu} do not depend on~$\la$.
Then the functions
\begin{gather}
\notag
\det\big(\eo{\uv^j}\big(\hmm\big)\big),\qquad\quad
\Tr\big(\big(\eo{\uv^j}\big(\hmm\big)\big)^\pw\big),\qquad\quad 
j=1,\ldots,\diunw,\qquad\pw\in\zsp,
\end{gather}
do not depend on~$\la$ either.
Since, by Theorem~\ref{thdet}, we have~\er{dettr1}, 
this contradicts to the assumption that at least one of the functions~\er{inv}
depends nontrivially on~$\la$.
\end{proof}

Theorem~\ref{thtriv} is proved in~\cite{IgMLR26}.
\begin{theorem}
\label{thtriv}
A MLR $\big(\mm(n,\ud,\la),\,\lwt(n,\ud,\la)\big)$ 
is gauge equivalent to a trivial MLR if and only if\/ 
$\mm=\mm(n,\ud,\la)$ satisfies $\eo{\uv^j}(\mm)=0$ for all $j=1,\ldots,\diunw$.
\end{theorem}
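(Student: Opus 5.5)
The plan is to prove the two implications separately. The ``only if'' direction follows from Theorem~\ref{th1}. For the ``if'' direction I would construct the gauge transformation explicitly by successively removing the dependence of~$\mm$ on the extreme dynamical variables.

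\emph{Only if.} Suppose $\hmm=\sop(\gbf)\mm\gbf^{-1}$ is trivial, so $\hmm$ does not depend on any~$\uv_\ilk$. Then $\dd{\uv^j_\ilk}(\hmm)=0$ for all~$\ilk$, and hence $\eo{\uv^j}(\hmm)=0$. By~\er{gmg},
\begin{gather*}
\eo{\uv^j}(\mm)=\sop(\gbf)^{-1}\eo{\uv^j}(\hmm)\sop(\gbf)=0.
\end{gather*}

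\emph{If.} Assume $\eo{\uv^j}(\mm)=0$ for all~$j$. By~\er{gmg}, this condition is preserved by every GT, so I may gauge freely. Let $\mm$ depend on $\uv_\ilk$ only for $p\le\ilk\le q$, and induct on $q-p$.

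\textbf{Step 1: shifting the window.} Take $\gbf=\sop^{-1}(\mm)^{-1}$, or more generally $\gbf$ built from products of shifts of~$\mm$. Then $\hmm=\sop(\gbf)\mm\gbf^{-1}$ becomes a conjugate of a shift of~$\mm$. This lets me move the dependence window, so I may assume $p\le 0\le q$, or even $p=0$.

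\textbf{Step 2: removing $\uv_q$.} Write out $\eo{\uv^j}(\mm)=0$. Apply $\nb_\mm^{q}$ to it and examine the terms whose dependence reaches the highest shifted variable. Each term $\nb_\mm^{-\ilk}$ with $\ilk<q$ involves only finitely many shifts, and one can track which variables they contain. The goal is to show that $\dd{\uv^j_q}(\mm)\mm^{-1}$ equals an expression of the form
\begin{gather*}
\sop\big(\gbf^{-1}\dd{\uv^j_{q-1}}\gbf\big)-\nb_\mm^{-1}\big(\sop(\gbf^{-1}\dd{\uv^j_{q}}\gbf)\big)
\end{gather*}
for a suitable $\gbf$ depending on fewer variables, mimicking the structure of~\er{tmm1}. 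Then the gauge transform by this $\gbf$ kills the $\uv_q$-dependence. The candidate is to take $\gbf$ as the solution of the matrix ODE system $\dd{\uv^j_{q-1}}(\gbf)=\gbf\cdot X_j$, where $X_j$ is read off from the top-order part of the equation. Its solvability requires the compatibility $\dd{\uv^i_{q-1}}X_j-\dd{\uv^j_{q-1}}X_i+[X_i,X_j]=0$, which should follow from the equality of mixed partials of~$\mm$.

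\textbf{Step 3: iteration.} Repeat Step~2 until $\mm$ depends on no $\uv_\ilk$, giving a trivial MLR by Definition~\ref{dtrmlp}; composing the GTs gives the required~$\gbf$.

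The main obstacle is Step~2. One difficulty is extracting a usable top-order identity from the infinite-looking, though actually finite, sum~\er{foreo}, whose terms $\nb_\mm^{-\ilk}$ mix several shifts. The other is verifying the flatness condition needed to integrate~$\gbf$. Here I would first treat the case $q-p=1$, where $\eo{\uv^j}(\mm)=0$ reads
\begin{gather*}
\dd{\uv^j_0}(\mm)\mm^{-1}+\mm\,\sop^{-1}\big(\dd{\uv^j_1}(\mm)\mm^{-1}\big)\mm^{-1}=0,
\end{gather*}
and check directly that $\mm=\sop(\gbf)\,C\,\gbf^{-1}$ with $\gbf=\gbf(\uv_0)$ and $C$ independent of the dynamical variables. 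Then I would generalize the separation-of-variables argument.
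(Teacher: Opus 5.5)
The paper does not prove this theorem (it cites~\cite{IgMLR26}), so I judge your outline on its own. Your ``only if'' direction is correct, and so is Step~1: the GTs $\gbf=\mm$ and $\gbf=\sop^{-1}(\mm)^{-1}$ turn $\mm$ into $\sop(\mm)$ and $\sop^{-1}(\mm)$, so you may assume $\mm=\mm(\nt,\uv_0,\ldots,\uv_q,\la)$.

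The gap is Step~2, which carries the entire ``if'' direction. You never say how $\eo{\uv^j}(\mm)=0$ is used, and the obstacle you name (flatness) is not the real one. Take $\gbf$ independent of $\uv_q,\uv_{q+1},\ldots$. Then \er{tmm1} with $\ilk=q$ shows that $\sop(\gbf)\mm\gbf^{-1}$ is independent of $\uv_q$ if and only if $\gbf^{-1}\dd{\uv^j_{q-1}}(\gbf)=X_j:=-\sop^{-1}\big(\dd{\uv^j_q}(\mm)\mm^{-1}\big)$ for all $j$. These $X_j$ are the logarithmic derivatives of $\sop^{-1}(\mm)^{-1}$. So compatibility is automatic, and the general solution is $\gbf=K\,\sop^{-1}(\mm)^{-1}$ with $K$ independent of $\uv_{q-1}$; for $K=I$ this is exactly your shift from Step~1. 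The real issue is that $X_j$ in general depends on $\uv_{-1}$, and hence so does every such $\gbf$. The construction therefore only reproduces the window shift of Step~1 instead of shrinking the window. To get a $\gbf$ depending only on $\uv_0,\ldots,\uv_{q-1}$, you need $\dd{\uv^j_q}(\mm)\mm^{-1}$ to be independent of $\uv_0$, and nothing in your proposal establishes this.

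That independence is precisely what the hypothesis yields. With the window $0,\ldots,q$ one has
\begin{gather*}
\eo{\uv^j}(\mm)=\sum_{\ilk=0}^{q}\Phi_\ilk\,\sop^{-\ilk}\big(\dd{\uv^j_\ilk}(\mm)\mm^{-1}\big)\Phi_\ilk^{-1},\qquad
\Phi_\ilk=\mm\,\sop^{-1}(\mm)\cdots\sop^{-(\ilk-1)}(\mm),\qquad \Phi_0=I.
\end{gather*}
The lowest variable $\uv_{-q}$ occurs only in the factor $\sop^{-q}\big(\dd{\uv^j_q}(\mm)\mm^{-1}\big)$. Differentiating $\eo{\uv^j}(\mm)=0$ with respect to $\uv^i_{-q}$ therefore gives $\dd{\uv^i_0}\big(\dd{\uv^j_q}(\mm)\mm^{-1}\big)=0$.

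Now $\mm$ and $\mathcal{F}:=\mm|_{\uv_0=c}$ satisfy the same linear system in $\uv_q$, so $\mathcal{C}:=\mathcal{F}^{-1}\mm$ is independent of $\uv_q$. The GT with $\sop(\gbf)=\mathcal{F}^{-1}$ then gives $\hmm=\mathcal{C}\,\sop^{-1}(\mathcal{F})$, which depends only on $\uv_0,\ldots,\uv_{q-1}$. No ODE has to be integrated, and your Step~3 then goes through by~\er{gmg}.

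Your hint of applying $\nb_\mm^{q}$ and looking at the top variable leads elsewhere. It isolates $\uv_{2q}$ in the $\ilk=0$ term and gives $\dd{\uv^i_q}\big(\mm^{-1}\dd{\uv^j_0}(\mm)\big)=0$. This also works, but it removes $\uv_0$, not $\uv_q$ as your Step~2 intends: write $\mm=(\mm\mathcal{B}^{-1})\mathcal{B}$ with $\mathcal{B}=\mm|_{\uv_q=c}$ and gauge by $\gbf=\mathcal{B}$.
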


\begin{example}
\lb{exy1}
Let $\diunw=1$. Then $\uv=\uv^1$.
Consider the equation
\begin{gather}
\lb{exsh1}
\pd_t(\uv^1)=\frac{1}{\uv^1_1-\uv^1_{-1}},
\end{gather}
which appears in the Yamilov list of integrable Volterra-type differential-difference equations~\cite{yam2006}
and is a symmetry of the (partial difference) quad-graph equation~$\mathrm{H}1$ from 
the Adler--Bobenko--Suris list~\cite{abs2003,MWXTMP2011}.

It is known that equation~\er{exsh1} possesses the following MLR
\begin{gather}
\lb{mlry}
\breve{\mm}=\begin{pmatrix}
 \uv^{1}_{0} & \la-\uv^{1}_{0} \uv^{1}_{1} \\
 1 & -\uv^{1}_{1} \\
\end{pmatrix},
\qquad\quad
\breve{\lwt}=\begin{pmatrix}
 -\frac{\uv^{1}_{1}}{\la (\uv^{1}_{1}-\uv^{1}_{-1})} & 
-\frac{\uv^{1}_{-1} \uv^{1}_{1}}{\la (\uv^{1}_{-1}-\uv^{1}_{1})} \\
 \frac{1}{\la (\uv^{1}_{-1}-\uv^{1}_{1})} & -\frac{\uv^{1}_{-1}}{\la (\uv^{1}_{-1}-\uv^{1}_{1})} \\
\end{pmatrix}.
\end{gather}
Computing the first invariant in~\er{inv} for the MLR~\er{mlry}, we get
\begin{gather}
\lb{detm}
\det\big(\eo{\uv^1}\big(\breve{\mm}\big)\big)=-\frac{1}{\la^2}(\uv^{1}_{-1}-\uv^{1}_{1})^2.
\end{gather}
Since the function~\er{detm} depends nontrivially on $\lambda$,
Theorem~\ref{thpar} implies that 
the parameter~$\la$ cannot be removed from this MLR by any GT.
\end{example}
\begin{example}
\lb{exy2}
Equation~\er{exsh1} possesses also the following MLR
\begin{gather}
\lb{mlry2}
\begin{gathered}
\widetilde{\mm}=\begin{pmatrix}
 \la \uv^{1}_{1}-\uv^{1}_{0} & 
-(\la)^2\uv^{1}_{0}\uv^{1}_{1}+\la (\uv^{1}_{0})^2+\la (\uv^{1}_{1})^2-\uv^{1}_{0} \uv^{1}_{1}+1 \\
 1 & \uv^{1}_{1}-\la \uv^{1}_{0} \\
\end{pmatrix},\\
\widetilde{\lwt}=\begin{pmatrix}
 \frac{\uv^{1}_{1}-\la \uv^{1}_{0}}{\uv^{1}_{-1}-\uv^{1}_{1}} & 
\frac{(\la\uv^{1}_{0})^2-\la \uv^{1}_{-1} \uv^{1}_{0}-\la \uv^{1}_{1} \uv^{1}_{0}%
+\uv^{1}_{-1} \uv^{1}_{1}-\la}{\uv^{1}_{-1}-\uv^{1}_{1}} \\
 \frac{1}{\uv^{1}_{1}-\uv^{1}_{-1}} & \frac{\uv^{1}_{-1}-\la \uv^{1}_{0}}{\uv^{1}_{1}-\uv^{1}_{-1}} \\
\end{pmatrix}.
\end{gathered}
\end{gather}
Computing the first invariant in~\er{inv} for the MLR~\er{mlry2}, we obtain
\begin{gather}
\lb{detbm}
\det\big(\eo{\uv^1}\big(\widetilde{\mm}\big)\big)=-(\uv^{1}_{-1}-\uv^{1}_{1})^2.
\end{gather}
Since~\er{detbm} is not equal to~\er{detm},
Theorem~\ref{thdet}  and Remark~\ref{rnec} imply that 
the MLR~\er{mlry2} is not gauge equivalent to the MLR~\er{mlry}.
\end{example}
\begin{example}
\lb{exy3}
Computing the invariants~\er{inv} for the MLR~\er{mlry2},
one gets functions, which do not depend on~$\la$.
This suggests to try to remove the parameter~$\la$ from~\er{mlry2} by some GT.
Indeed, applying to the MLR~\er{mlry2} the gauge transformation $\gbf=
\begin{pmatrix}
 1 & -\la \uv^{1}_{0} \\
 0 & 1 \\
\end{pmatrix}$,
we obtain the gauge equivalent MLR
\begin{gather*}
\widehat\mm=\sop(\gbf)\cdot\widetilde{\mm}\cdot\gbf^{-1}=
\begin{pmatrix}
 -\uv^{1}_{0} & 1-\uv^{1}_{0} \uv^{1}_{1} \\
 1 & \uv^{1}_{1} \\
\end{pmatrix},\\
\widehat\lwt=\tdt(\gbf)\cdot\gbf^{-1}+\gbf\cdot\widetilde{\lwt}\cdot\gbf^{-1}=
\begin{pmatrix}
 \frac{\uv^{1}_{1}}{\uv^{1}_{-1}-\uv^{1}_{1}} & \frac{\uv^{1}_{-1} \uv^{1}_{1}}{\uv^{1}_{-1}-\uv^{1}_{1}} \\
 \frac{1}{\uv^{1}_{1}-\uv^{1}_{-1}} & \frac{\uv^{1}_{-1}}{\uv^{1}_{1}-\uv^{1}_{-1}} \\
\end{pmatrix},
\end{gather*}
which does not depend on~$\la$. 
\end{example}

\section{A comparison with results of S.Yu.~Sakovich and M.~Marvan 
on zero-curvature representations of PDEs}
\lb{szcr}

In this section we review some results of S.Yu.~Sakovich~\cite{sakov95} and 
M.~Marvan~\cite{marvan93,marvan97} on 
(continuous) zero-curvature representations of partial differential equations (PDEs)
and compare them with our results on semi-discrete MLRs of differential-difference equations.

Let $\diunw\in\zsp$. In this section we consider an equation 
for an $\diunw$-component vector-function $u=\big(u^1(x,t),\ldots,u^{\diunw}(x,t)\big)$ 
of two continuous (real or complex) variables~$x,\,t$.
For each $\ixd\in\zp$, we set
\begin{gather}
\lb{uimx}
u^i_{\ixd x}:=\frac{\pd^\ixd u^i}{\pd x^\ixd},\qquad\quad i=1,\dots,\diunw,\qquad
\ixd\in\zp.
\end{gather}
In particular, in the case $\ixd=0$ we have $u^i_{0x}=u^i$ 
and $u_{0x}=(u^1_{0x},\ldots,u^{\diunw}_{0x})=(u^1,\ldots,u^{\diunw})$.

In this section, we use the formal theory 
of (1+1)-dimensional evolutionary PDEs, where one regards 
\begin{gather}
\lb{xuu}
x,\qquad\qquad
u_{\ixd x}=(u^1_{\ixd x},\ldots,u^\diunw_{\ixd x}),\quad\ixd\in\zz,
\end{gather}
as independent quantities.
In what follows, the notation of the type $\fn=\fn(x,\uxd)$ means 
that a function~$\fn$ depends on a finite number of the variables~\er{xuu}.

The \emph{total derivative operator $D_x$ with respect to~$x$} 
can be applied to functions of the variables~\er{xuu} by the formula
\begin{gather}
\lb{dx}
D_x\big(\fn(x,\uxd)\big)=\frac{\pd\fn}{\pd x}+
\sum_{\substack{i=1,\ldots,\diunw,\\ \ixd\ge 0}}u^i_{(\ixd+1)x}
\cdot\frac{\pd\fn}{\pd u^i_{\ixd x}}.
\end{gather}

Let $\ord\in\zsp$. Consider a (1+1)-dimensional evolutionary PDE
\begin{gather}
\lb{evpde}
\pd_t(u)=\feqd(x,u_{0x},u_{1x},\ldots,u_{\ord x}),
\end{gather}
where $\feqd$ is an $\diunw$-component vector-function $\feqd=(\feqd^1,\ldots,\feqd^{\diunw})$, 
which may depend on~$x$ and 
on~$u_{\ixd x}=(u^1_{\ixd x},\ldots,u^\diunw_{\ixd x})$ for $\ixd=0,1,\ldots,\ord$.

Using the components of $u=\big(u^1(x,t),\ldots,u^{\diunw}(x,t)\big)$ 
and of $\feqd=(\feqd^1,\ldots,\feqd^{\diunw})$, one can rewrite~\eqref{evpde} as
\begin{gather}
\lb{cevpd}
\pd_t\big(u^i\big)=\feqd^i(x,u_{0x},\ldots,u_{\ord x}),\qquad\quad
i=1,\ldots,\diunw.
\end{gather}
Using~\er{uimx} and~\er{dx}, we see that for each $\ixd\in\zp$ equation~\er{evpde} implies 
\begin{gather}
\lb{uilpd}
\pd_t(u^i_{\ixd x})=(D_x)^\ixd\big(\feqd^i(x,u_{0x},\ldots,u_{\ord x})\big),
\qquad\quad i=1,\ldots,\diunw,\qquad \ixd\in\zp.
\end{gather}

The \emph{total derivative operator~$D_t$ corresponding to equation~\eqref{evpde}} 
acts on functions of the variables~\er{xuu} as follows
\begin{gather}
\lb{dtpd}
D_t\big(\fn(x,\uxd)\big)=\sum_{\substack{i=1,\ldots,\diunw,\\ \ixd\ge 0}}
(D_x)^\ixd\big(\feqd^i(x,u_{0x},\ldots,u_{\ord x})\big)\cdot\frac{\pd\fn}{\pd u_{\ixd x}}.
\end{gather}
Formula~\er{dtpd} reflects the chain rule for the derivative with respect to~$t$, 
taking into account~\er{uilpd}.
It is easily seen (and is well known) that the operators $D_x$ and $D_t$ commute.

\begin{definition}
\lb{dzcr}
Let $\sm\in\zsp$. 
Let $\azc=\azc(x,\uxd,\la)$ and $\bzc=\bzc(x,\uxd,\la)\big)$
be $\sm\times\sm$ matrix-functions
depending on the variables~\er{xuu} and a complex parameter~$\la$.
Suppose that one has
\begin{gather}
\lb{zcr}
D_x(\bzc)-D_t(\azc)+[\azc,\bzc]=0,
\end{gather}
where $D_x$, $D_t$ are given by~\er{dx},~\er{dtpd}.
Then the pair $(\azc,\bzc)$ is called a \emph{zero-curvature representation} (ZCR) for equation~\eqref{evpde}.

We say that the matrix $\azc=\azc(x,\uxd,\la)$ is the \emph{$x$-part} of the ZCR $(\azc,\bzc)$.

Now, a \emph{gauge transformation} (GT) is an invertible $\sm\times\sm$ 
matrix-function $\gbf=\gbf(x,\uxd,\la)$.
If $(\azc,\bzc)$ is a ZCR for~\eqref{evpde}, then the matrices 
\begin{gather}
\lb{gzcr}
\widehat{\azc}=-D_x(\gbf)\cdot\gbf^{-1}+\gbf\cdot\azc\cdot\gbf^{-1},\qquad\quad
\widehat{\bzc}=-D_t(\gbf)\cdot\gbf^{-1}+\gbf\cdot\bzc\cdot\gbf^{-1}
\end{gather}
form a ZCR for equation~\eqref{evpde} as well.
The ZCR $\big(\widehat{\azc},\,\widehat{\bzc}\big)$ is \emph{gauge equivalent} to the ZCR $(\azc,\bzc)$.
\end{definition}

\begin{remark}
\lb{rcom}
Consider a ZCR $\big(\azc(x,\uxd,\la),\,\bzc(x,\uxd,\la)\big)$, 
a gauge transformation $\gbf=\gbf(x,\uxd,\la)$, and the corresponding 
gauge equivalent ZCR $\big(\widehat{\azc},\,\widehat{\bzc}\big)$ given by~\er{gzcr}.

Let $\tws{\sm}$ be the vector space of all $\sm\times\sm$ matrix-functions of the form $\mathrm{Q}=\mathrm{Q}(x,\uxd,\la)$.
The expressions $D_x+\azc$, $\,D_t+\bzc$ as well as $\gbf$ can be regarded as linear operators 
on~$\tws{\sm}$ given by
$$
(D_x+\azc)(\mathrm{Q}):=D_x(\mathrm{Q})+\azc\cdot \mathrm{Q},
\qquad (D_t+\bzc)(\mathrm{Q}):=D_t(\mathrm{Q})+\bzc\cdot \mathrm{Q},\qquad
\gbf(\mathrm{Q}):=\gbf\cdot \mathrm{Q},\qquad \mathrm{Q}\in\tws{\sm}.
$$
Then relation~\er{zcr} says that the operators $D_x+\azc$ and $D_t+\bzc$ commute, i.e.,
$[D_x+\azc,\,D_t+\bzc]=0$. Formulas~\er{gzcr} are equivalent to the equations
$$
D_x+\widehat{\azc}=\gbf(D_x+\azc)\gbf^{-1},\qquad\quad
D_t+\widehat{\bzc}=\gbf(D_t+\azc)\gbf^{-1},
$$
where each term is regarded as an operator on~$\tws{\sm}$.
\end{remark}

\begin{definition}
\lb{dtrzcr}
A ZCR~$(\azc,\bzc)$ for equation~\er{evpde} is said to be \emph{trivial} 
if $\azc$ does not depend on~$u_{\ixd x}$ for any $\ixd\in\zp$.
Then \er{dtpd} yields $D_t(\azc)=0$, and from~\er{zcr} 
we get $D_x(\bzc)=-[\azc,\bzc]$, which implies that $\bzc$ does not depend on~$u_{\ixd x}$ either.
Hence a trivial ZCR does not provide any information about equation~\er{evpde}.
\end{definition}

Following S.Yu.~Sakovich~\cite{sakov95}, for each $j=1,\dots,\diunw$ and 
any $\sm\times\sm$ matrix-functions 
$\mathrm{A}=\mathrm{A}(x,\uxd,\la)$, $\,\mathrm{Q}=\mathrm{Q}(x,\uxd,\la)$, we set
\begin{gather}
\lb{oaq}
\tnb_{\mathrm{A}}(\mathrm{Q})=D_x(\mathrm{Q})+[\mathrm{A},\mathrm{Q}],\\
\lb{cea}
\ce{u^j}(\mathrm{A})=\sum_{\ixd\ge 0}(-1)^\ixd\big(\tnb_{\mathrm{A}}\big)^\ixd\big(\pd_{u^j_{\ixd x}}(\mathrm{A})\big),\qquad
\mathrm{A}=\mathrm{A}(x,\uxd,\la),
\qquad j=1,\dots,\diunw.
\end{gather}
Here $\big(\tnb_{\mathrm{A}}\big)^\ixd$ is the $\ixd$th power of the operator $\tnb_{\mathrm{A}}$ defined by~\er{oaq}.

For instance, 
if $\mathrm{A}=\mathrm{A}(x,u^i_{0x},u^i_{1x},\la)$ depends only 
on~$x$, $u^i_{0x}$, $u^i_{1x}$, $\,i=1,\ldots,\diunw$, and on~$\la$, then 
\begin{multline*}
\ce{u^j}\big(\mathrm{A}(x,u^i_{0x},u^i_{1x},\la)\big)=
\sum_{\ixd\ge 0}(-1)^\ixd\big(\tnb_{\mathrm{A}}\big)^\ixd\big(\pd_{u^j_{\ixd x}}(\mathrm{A})\big)=
\big(\tnb_{\mathrm{A}}\big)^0\big(\pd_{u^j_{0x}}(\mathrm{A})\big)
-\big(\tnb_{\mathrm{A}}\big)^1\big(\pd_{u^j_{1x}}(\mathrm{A})\big)=\\
=\pd_{u^j_{0x}}(\mathrm{A})-\Big(D_x\big(\pd_{u^j_{1x}}(\mathrm{A})\big)
+\big[\mathrm{A},\,\pd_{u^j_{1x}}(\mathrm{A})\big]\Big).
\end{multline*}

The results of Propositions~\ref{pinvs},~\ref{ptriv} are essentially presented in~\cite{sakov95}.
\begin{proposition}[\cite{sakov95}]
\lb{pinvs}
Consider a $\sm\times\sm$ matrix $\azc=\azc(x,\uxd,\la)$ and an invertible $\sm\times\sm$ 
matrix $\gbf=\gbf(x,\uxd,\la)$.
Set $\widehat{\azc}=-D_x(\gbf)\cdot\gbf^{-1}+\gbf\cdot\azc\cdot\gbf^{-1}$.
Then we have
\begin{gather*}
\ce{u^j}\big(\widehat{\azc}\big)=\gbf\cdot\ce{u^j}(\azc)\cdot\gbf^{-1}\qquad\quad
\forall\,j=1,\ldots,\diunw.
\end{gather*}
\end{proposition}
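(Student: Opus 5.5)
The plan mirrors the proof of Theorem~\ref{th1}, with $D_x$ in place of $\sop$ and the operator $\tnb$ in place of $\nb$. There are two steps: an intertwining property for $\tnb$, and a telescoping argument for the sum~\er{cea}.

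\textbf{Step 1: intertwining (analogue of Lemma~\ref{lem1}).} For $\widehat{\azc}=-D_x(\gbf)\gbf^{-1}+\gbf\azc\gbf^{-1}$ and any $\mathrm{Q}$, one has
\begin{gather*}
\tnb_{\widehat{\azc}}\big(\gbf\mathrm{Q}\gbf^{-1}\big)=\gbf\cdot\tnb_{\azc}(\mathrm{Q})\cdot\gbf^{-1}.
\end{gather*}
This is a direct computation. We use $D_x(\gbf^{-1})=-\gbf^{-1}D_x(\gbf)\gbf^{-1}$. The terms $D_x(\gbf)\mathrm{Q}\gbf^{-1}-\gbf\mathrm{Q}\gbf^{-1}D_x(\gbf)\gbf^{-1}$ are cancelled by the commutator with $-D_x(\gbf)\gbf^{-1}$. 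Alternatively, it follows from Remark~\ref{rcom}: $D_x+\widehat{\azc}$ is conjugate to $D_x+\azc$. By induction the identity holds for every power $\tnb^\ixd$.

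\textbf{Step 2: the derivative term.} Using $\pd_{u^j_{\ixd x}}\circ D_x=D_x\circ\pd_{u^j_{\ixd x}}+\pd_{u^j_{(\ixd-1)x}}$ (with the convention $\pd_{u^j_{(-1)x}}=0$), compute
\begin{gather*}
\pd_{u^j_{\ixd x}}\big(\widehat{\azc}\big)=\gbf\Big(\pd_{u^j_{\ixd x}}(\azc)-\tnb_{\azc}(H_\ixd)-H_{\ixd-1}\Big)\gbf^{-1},\qquad H_\ixd:=\gbf^{-1}\pd_{u^j_{\ixd x}}(\gbf),\quad H_{-1}:=0.
\end{gather*}
To check this, expand $\pd_{u^j_{\ixd x}}$ of $-D_x(\gbf)\gbf^{-1}$ and of $\gbf\azc\gbf^{-1}$. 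Then rewrite
\begin{gather*}
\pd_{u^j_{\ixd x}}(\gbf)=\gbf H_\ixd,\qquad D_x(\gbf H_\ixd)\gbf^{-1}=D_x(\gbf)H_\ixd\gbf^{-1}+\gbf D_x(H_\ixd)\gbf^{-1},
\end{gather*}
and collect the terms. The result is $\gbf\big(\pd_{u^j_{\ixd x}}\azc-D_x(H_\ixd)-[\azc,H_\ixd]-H_{\ixd-1}\big)\gbf^{-1}$, where the extra terms $D_x(\gbf)H_\ixd\gbf^{-1}$ cancel. I expect this bookkeeping of cancellations to be the main obstacle, though it is routine.

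\textbf{Step 3: telescoping.} Apply $\sum_{\ixd}(-1)^\ixd(\tnb_{\widehat{\azc}})^\ixd$ to the result of Step 2 and use Step 1 to pull $\gbf(\cdot)\gbf^{-1}$ outside. This gives
\begin{gather*}
\ce{u^j}\big(\widehat{\azc}\big)=\gbf\Big(\ce{u^j}(\azc)-\sum_{\ixd\ge0}(-1)^\ixd\tnb_{\azc}^{\ixd+1}(H_\ixd)-\sum_{\ixd\ge1}(-1)^\ixd\tnb_{\azc}^{\ixd}(H_{\ixd-1})\Big)\gbf^{-1}.
\end{gather*}
Reindex the last sum by $\ixd\mapsto\ixd+1$. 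It becomes $-\sum_{\ixd\ge0}(-1)^{\ixd}\tnb_{\azc}^{\ixd+1}(H_\ixd)$ with the sign flipped, so it exactly cancels the middle sum. All sums are finite because $\gbf$ and $\azc$ depend on finitely many variables. This leaves $\gbf\,\ce{u^j}(\azc)\,\gbf^{-1}$, which is the claim of Proposition~\ref{pinvs}.
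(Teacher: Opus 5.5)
Your proof is correct: the intertwining identity for $\tnb$, the formula $\pd_{u^j_{\ixd x}}(\widehat{\azc})=\gbf\big(\pd_{u^j_{\ixd x}}(\azc)-\tnb_{\azc}(H_{\ixd})-H_{\ixd-1}\big)\gbf^{-1}$, and the alternating-sign cancellation all check out. In Step 3, the reindexed last sum together with its leading minus sign equals $+\sum_{\ixd\ge0}(-1)^{\ixd}\tnb_{\azc}^{\ixd+1}(H_{\ixd})$, and this is what cancels the middle sum. The paper does not prove Proposition~\ref{pinvs} itself (it attributes it to Sakovich), but your argument is precisely the continuous transcription of its proof of Theorem~\ref{th1}: Step 1 corresponds to Lemma~\ref{lem1}, Step 2 to formula~\er{tmm1}, and Step 3 to the index-shift cancellation, so it takes essentially the same approach.
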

\begin{proposition}[\cite{sakov95}]
\lb{ptriv}
A ZCR $\big(\azc(x,\uxd,\la),\,\bzc(x,\uxd,\la)\big)$ 
is gauge equivalent to a trivial ZCR if and only if\/ 
$\azc=\azc(x,\uxd,\la)$ satisfies $\ce{u^j}(\azc)=0$ for all $j=1,\ldots,\diunw$.
\end{proposition}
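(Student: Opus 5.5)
The ``only if'' direction is immediate from Proposition~\ref{pinvs}. Suppose $(\azc,\bzc)$ is gauge equivalent via some $\gbf$ to a trivial ZCR $(\widehat{\azc},\widehat{\bzc})$. Then $\widehat{\azc}$ does not depend on any $u_{\ixd x}$, so every $\pd_{u^j_{\ixd x}}(\widehat{\azc})$ vanishes and~\er{cea} gives $\ce{u^j}(\widehat{\azc})=0$. Proposition~\ref{pinvs} then yields $\ce{u^j}(\azc)=\gbf^{-1}\ce{u^j}(\widehat{\azc})\gbf=0$ for all~$j$.

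For the ``if'' direction I will argue by induction on the order of~$\azc$. Let $m$ be the largest integer such that $\pd_{u^j_{mx}}(\azc)\neq0$ for some~$j$; if there is no such $m$, the ZCR is already trivial. By Proposition~\ref{pinvs}, the condition $\ce{u^j}(\azc)=0$ is preserved under gauge transformations. It therefore suffices to find a gauge transformation that lowers $m$ by one, and then to compose finitely many such transformations. The triviality of $\bzc$ for the final ZCR is automatic by Definition~\ref{dtrzcr}.

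The base case is $m=0$. Then~\er{cea} reduces to $\ce{u^j}(\azc)=\pd_{u^j}(\azc)$, so the hypothesis forces $\azc$ to be independent of $u$, i.e., the ZCR is trivial. Now let $m\ge1$. In $\ce{u^j}(\azc)$ the highest jet variable that can occur is $u_{2m x}$, and it arises only from the term $(-1)^m(\tnb_{\azc})^m(\pd_{u^j_{mx}}\azc)$, through $D_x^m$. The coefficient of $u^i_{2mx}$ equals $(-1)^m\pd_{u^i_{mx}}\pd_{u^j_{mx}}(\azc)$; the commutator terms have lower order because $2m>m$. Its vanishing shows that $\azc$ is affine in $u_{mx}$:
\begin{gather*}
\azc=\sum_{i=1}^{\diunw}P_i\,u^i_{mx}+R,
\end{gather*}
where $P_i$ and $R$ depend only on $x$, $u_{0x},\ldots,u_{(m-1)x}$ and~$\la$. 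Next I look for $\gbf$ depending on $x,u_{0x},\ldots,u_{(m-1)x},\la$ such that the term $-D_x(\gbf)\gbf^{-1}$ in~\er{gzcr} cancels the part of $\gbf\azc\gbf^{-1}$ that is linear in $u_{mx}$. Since the only $u_{mx}$-contribution to $D_x(\gbf)\gbf^{-1}$ is $\sum_i u^i_{mx}\,\pd_{u^i_{(m-1)x}}(\gbf)\gbf^{-1}$, the requirement becomes the linear system
\begin{gather*}
\pd_{u^i_{(m-1)x}}(\gbf)=\gbf P_i,\qquad i=1,\ldots,\diunw,
\end{gather*}
in the variables $u^i_{(m-1)x}$, with all lower-order variables treated as parameters. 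By the Frobenius theorem this system has a local invertible solution, for instance with $\gbf=\mathrm{Id}$ at a base point, provided the flatness condition $\pd_{u^k_{(m-1)x}}P_i-\pd_{u^i_{(m-1)x}}P_k=[P_i,P_k]$ holds; the exact sign convention will be fixed by the computation. With such a $\gbf$, the new $\widehat{\azc}$ has order at most $m-1$.

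The main obstacle is extracting this flatness condition from $\ce{u^j}(\azc)=0$. My plan is to substitute the affine form of $\azc$ into~\er{cea} and read off the coefficient of $u^k_{(2m-1)x}$, and possibly also the coefficient of the quadratic terms in $u_{mx}$. These coefficients should involve exactly $\pd_{u_{(m-1)x}}P$ and commutators $[P_i,P_k]$, coming from the terms with $\ixd=m$ and $\ixd=m-1$ in~\er{cea}. I expect that their vanishing, after antisymmetrization in $i,k$ if needed, gives precisely the zero-curvature condition for the connection $\sum_i P_i\,\sm u^i_{(m-1)x}$. If that bookkeeping proves too messy, I will fall back on an alternative: first apply an auxiliary gauge transformation normalizing $P_i$ at a point, and then verify the condition order by order.
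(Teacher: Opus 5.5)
The paper does not prove this proposition; it only attributes it to Sakovich, so there is no in-paper argument to compare yours with. Judged on its own, your induction on the order $m$ of $\mathbf{A}$ is a correct argument. The ``only if'' direction, the base case, the affineness of $\mathbf{A}$ in $u_{mx}$ (from the $u_{2mx}$-coefficient), the reduction to $\partial_{u^i_{(m-1)x}}\mathbf{G}=\mathbf{G}P_i$, and the compatibility condition you wrote (sign included) are all right. The one incomplete point is the crux you flagged, which you leave as an expectation. It does hold exactly as you predicted, from the $u_{(2m-1)x}$-coefficient alone. You need no antisymmetrization, no quadratic terms in $u_{mx}$, and no fallback.

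To verify it, write $\partial_k=\partial_{u^k_{(m-1)x}}$ and expand $(\widetilde{\nabla}_{\mathbf{A}})^r=(D_x+\operatorname{ad}_{\mathbf{A}})^r$ into words in $D_x$ and $\operatorname{ad}_{\mathbf{A}}$.

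Since $D_x^a\mathbf{A}$ has order $\le m+a$, the summand of index $r$ in $\mathbf{C}^{u^j}(\mathbf{A})$ has order $\le m+r$. So summands with $r\le m-2$ cannot contain $u_{(2m-1)x}$. Words containing $\operatorname{ad}_{\mathbf{A}}$ have fewer $D_x$'s, so only the following terms reach order $2m-1$:
\begin{itemize}
\item In the $r=m$ summand, applied to $P_j=\partial_{u^j_{mx}}\mathbf{A}$, $u^k_{(2m-1)x}$ arises from $D_x^mP_j$, with coefficient $\partial_kP_j$.
\item In the same summand, it also arises from the piece $[D_x^{m-1}\mathbf{A},P_j]$ of $D_x^{m-1}[\mathbf{A},P_j]$, with coefficient $[P_k,P_j]$.
\item In the $r=m-1$ summand, it arises only from $D_x^{m-1}\bigl(\partial_{u^j_{(m-1)x}}\mathbf{A}\bigr)$, with coefficient $\partial_jP_k$.
\end{itemize}

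Now $\mathbf{C}^{u^j}(\mathbf{A})$ is affine in $u_{(2m-1)x}$; for $m=1$, check directly that every term is affine in $u_x$. Hence the hypothesis gives
\[
(-1)^m\bigl(\partial_kP_j-\partial_jP_k+[P_k,P_j]\bigr)=0\qquad\text{for all } j,k,
\]
which is precisely your condition $\partial_kP_i-\partial_iP_k=[P_i,P_k]$.

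For the Frobenius step, take the solution with $\mathbf{G}=\mathrm{Id}$ at the base point. It stays invertible wherever it is defined, because $\partial_i\det\mathbf{G}=\det\mathbf{G}\cdot\operatorname{tr}P_i$. It also depends on the parameters $x,u_{0x},\ldots,u_{(m-2)x},\lambda$ with the same regularity as the $P_i$. It is therefore an admissible gauge transformation in the paper's setting. With this computation inserted, your proposal is a complete proof.
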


Recall that we have deduced Theorems~\ref{thdet},~\ref{thpar} from Theorem~\ref{th1}.
Similarly, Propositions~\ref{prdet},~\ref{prpar} below follow from Proposition~\ref{pinvs}.
\begin{proposition}[\cite{sakov95}]
\lb{prdet}
For each ZCR $\big(\azc(x,\uxd,\la),\,\bzc(x,\uxd,\la)\big)$, we consider
\begin{gather}
\label{invzc}
\det\big(\ce{u^j}(\azc)\big),\qquad
\Tr\big(\big(\ce{u^j}(\azc)\big)^\pw\big),\qquad\quad
\quad j=1,\ldots,\diunw,\qquad\pw\in\zsp.
\end{gather}
The functions~\er{invzc} are invariant with respect to the action 
of GTs on the set of ZCRs for a given equation~\er{evpde}, in the following sense. 
Consider a ZCR $\big(\azc(x,\uxd,\la),\,\bzc(x,\uxd,\la)\big)$, 
a gauge transformation $\gbf=\gbf(x,\uxd,\la)$, and the corresponding gauge equivalent ZCR 
$\big(\widehat{\azc}(x,\uxd,\la),\,\widehat{\bzc}(x,\uxd,\la)\big)$ given by~\er{gzcr}.
Then we have 
\begin{gather*}
\det\big(\ce{u^j}\big(\widehat{\azc}\big)\big)=\det\big(\ce{u^j}(\azc)\big),\qquad\quad
\Tr\big(\big(\ce{u^j}\big(\widehat{\azc}\big)\big)^\pw\big)=\Tr\big(\big(\ce{u^j}(\azc)\big)^\pw\big),\\
j=1,\ldots,\diunw,\qquad\quad\pw\in\zsp.
\end{gather*}
\end{proposition}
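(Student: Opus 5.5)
The argument is a direct analogue of the proof of Theorem~\ref{thdet}, with Theorem~\ref{th1} replaced by Proposition~\ref{pinvs}. Fix a ZCR $(\azc,\bzc)$ for~\er{evpde}, a gauge transformation $\gbf=\gbf(x,\uxd,\la)$, and the gauge equivalent ZCR $\big(\widehat{\azc},\widehat{\bzc}\big)$ given by~\er{gzcr}. The first formula in~\er{gzcr} is exactly $\widehat{\azc}=-D_x(\gbf)\cdot\gbf^{-1}+\gbf\cdot\azc\cdot\gbf^{-1}$. This is the hypothesis of Proposition~\ref{pinvs}, so for every $j=1,\ldots,\diunw$ we get
\begin{gather*}
\ce{u^j}\big(\widehat{\azc}\big)=\gbf\cdot\ce{u^j}(\azc)\cdot\gbf^{-1}.
\end{gather*}
Only the $x$-part of the ZCR enters the functions~\er{invzc}, so $\widehat{\bzc}$ plays no role.

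Next, for each fixed $j$, apply Lemma~\ref{lab} with
\begin{gather*}
A=\ce{u^j}(\azc),\qquad B=\ce{u^j}\big(\widehat{\azc}\big),\qquad C=\gbf.
\end{gather*}
Lemma~\ref{lab} is stated for matrices, and here we apply it pointwise. At each point $(x,\uxd,\la)$ where $\gbf$ is invertible, the matrices $A,B,C$ are ordinary $\sm\times\sm$ complex matrices satisfying $B=CAC^{-1}$. Then~\er{dettr} gives $\det(B)=\det(A)$ and $\Tr(B^\pw)=\Tr(A^\pw)$ for all $\pw\in\zsp$. Since this holds at every point, it holds as an identity of functions, which is the claimed invariance.

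There is no real obstacle once Proposition~\ref{pinvs} is taken as given; all of the substance sits in that conjugation identity. If one wanted to verify Proposition~\ref{pinvs} itself, I would mirror the computation~\er{tmm1} in the continuous setting. First one shows $\tnb_{\widehat{\azc}}(\gbf \mathrm{Q}\gbf^{-1})=\gbf\,\tnb_{\azc}(\mathrm{Q})\,\gbf^{-1}$, which follows from Remark~\ref{rcom}. Then one expands $\pd_{u^j_{\ixd x}}(\widehat{\azc})$. The extra terms coming from $D_x(\gbf)$ telescope in the alternating sum~\er{cea}, much as the $G$-terms cancelled after the index shift in the proof of Theorem~\ref{th1}. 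The delicate point there is the commutator $[\pd_{u^j_{\ixd x}},D_x]=\pd_{u^j_{(\ixd-1)x}}$, which makes the telescoping work.
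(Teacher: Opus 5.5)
Your proof is correct and follows the paper's route. The paper derives this proposition from Proposition~\ref{pinvs} in the same way Theorem~\ref{thdet} is derived from Theorem~\ref{th1}: the conjugation identity $\ce{u^j}\big(\widehat{\azc}\big)=\gbf\cdot\ce{u^j}(\azc)\cdot\gbf^{-1}$ followed by Lemma~\ref{lab} with $C=\gbf$. Your additional sketch of Proposition~\ref{pinvs}, which the paper only cites, is also sound; the telescoping via $[\pd_{u^j_{\ixd x}},D_x]=\pd_{u^j_{(\ixd-1)x}}$ does close the argument.
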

\begin{proposition}[\cite{sakov95}]
\lb{prpar}
Consider a ZCR given by matrices $\azc=\azc(x,\uxd,\la)$ and $\bzc=\bzc(x,\uxd,\la)$.
Suppose that at least one of the functions~\er{invzc} depends nontrivially on~$\la$.
Then the parameter~$\la$ cannot be removed from this ZCR by any GT.
\end{proposition}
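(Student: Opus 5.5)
The plan is to argue by contradiction, exactly parallel to the proof of Theorem~\ref{thpar}, using Proposition~\ref{prdet} in place of Theorem~\ref{thdet}. Suppose that $\la$ can be removed from the ZCR $(\azc,\bzc)$ by some gauge transformation $\gbf=\gbf(x,\uxd,\la)$. Here "removed" means the ZCR analogue of Definition~\ref{dprem}: the gauge equivalent ZCR $\big(\widehat{\azc},\widehat{\bzc}\big)$ given by~\er{gzcr} consists of matrices that do not depend on~$\la$.

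The first step is to observe that $\ce{u^j}\big(\widehat{\azc}\big)$ is then independent of $\la$. Indeed, formula~\er{cea} for $\ce{u^j}$ involves only three ingredients: the partial derivatives $\pd_{u^j_{\ixd x}}$, the total derivative $D_x$ from~\er{dx}, and commutators with $\widehat{\azc}$ through $\tnb_{\widehat{\azc}}$ in~\er{oaq}. None of these introduces any dependence on~$\la$. The sum in~\er{cea} is finite, because $\widehat{\azc}$ depends on finitely many $u_{\ixd x}$. Consequently, all the functions $\det\big(\ce{u^j}(\widehat{\azc})\big)$ and $\Tr\big(\big(\ce{u^j}(\widehat{\azc})\big)^\pw\big)$ do not depend on~$\la$.

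The second step is to invoke Proposition~\ref{prdet}. It says these functions coincide with the corresponding functions~\er{invzc} computed for $\azc$, so every function in~\er{invzc} is independent of~$\la$. This contradicts the hypothesis that at least one of them depends nontrivially on~$\la$. Hence no GT removes~$\la$.

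I do not expect a real obstacle, since the invariance is supplied by Proposition~\ref{prdet}, which in turn rests on Proposition~\ref{pinvs} and Lemma~\ref{lab} with $C=\gbf$. The only point requiring care is the first step, namely that the operator $\ce{u^j}$ maps $\la$-independent matrices to $\la$-independent matrices. This holds because the definitions~\er{oaq},~\er{cea} contain no explicit $\la$; it is worth stating explicitly. Note also that only $\widehat{\azc}$ (not $\widehat{\bzc}$) needs to be $\la$-free for the argument.
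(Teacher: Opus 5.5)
Your proposal is correct and follows the paper's route. The paper derives Proposition~\ref{prpar} from Proposition~\ref{pinvs} (via Proposition~\ref{prdet}) by the same contradiction argument it uses to get Theorem~\ref{thpar} from Theorem~\ref{thdet}; your explicit check that $\ce{u^j}$ sends $\la$-independent matrices to $\la$-independent ones (because $D_x$, $\pd_{u^j_{\ixd x}}$ and commutators with $\widehat{\azc}$ introduce no $\la$) only spells out a step the paper leaves implicit.
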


Our formulas~\er{nbmq},~\er{foreo},
invariants~\er{inv}, and results in Section~\ref{ssdmlr} 
can be regarded as ``semi-discrete (differential-difference) analogs''
of Sakovich's formulas~\er{oaq},~\er{cea}, invariants~\er{invzc}
and Sakovich's results in Propositions~\ref{pinvs}--\ref{prpar},
which consider the continuous (partial differential) case.
Note that
\begin{itemize}
	\item the operator $\nb_{M}$ defined by~\er{nbmq} is invertible,
while the operator $\tnb_{A}$ given by~\er{oaq} is not invertible;
\item the structure of~\er{foreo} is quite different from that of~\er{cea}.
\end{itemize}

S.Yu.~Sakovich's results from~\cite{sakov95} described here 
consider ZCRs for a (1+1)-dimensional evolutionary PDE~\er{evpde}.
Independently of Sakovich,
M.~Marvan~\cite{marvan93,marvan97,MarvCohSpPar02,MarvSpPar10} 
developed a theory (using jet space and cohomology associated with ZCRs), 
which gives similar results (as well as generalizations and extensions) 
on ZCRs for a much more general class of PDEs (including non-evolutionary PDEs).

\begin{remark}
Note that, in addition to the results reviewed in this section,
S.Yu.~Sakovich's works (\cite{sakov95,sakov2004,SakTrLax} and references therein) 
contain also many more developments of 
gauge-invariant approaches to ZCRs for (1+1)-dimensional evolutionary PDEs, 
including a theory of cyclic bases for ZCRs, which has applications in the following directions:
\begin{itemize}
\item constructing PDEs having ZCRs with a given $x$-part;
\item obtaining hierarchies and recursion operators for given spectral problems;
\item distinguishing between ``true'' and ``fake'' ZCRs (Lax pairs) for PDEs.
\end{itemize}
In order to define a cyclic basis for a given ZCR $\big(\azc(x,\uxd,\la),\,\bzc(x,\uxd,\la)\big)$,
Sakovich uses the matrices 
\begin{gather}
\label{cbzcr}
\big(\tnb_{\azc}\big)^{\rpw}\big(\ce{u^j}(\azc)\big),\qquad\quad
j=1,\ldots,\diunw,\qquad\rpw\in\zp,
\end{gather}
defined by means of~\er{oaq},~\er{cea}. 
The matrix $\big(\tnb_{\azc}\big)^0\big(\ce{u^j}(\azc)\big)=\ce{u^j}(\azc)$ 
is included in~\er{cbzcr}. See~\cite{sakov95,sakov2004,SakTrLax} for details.

Using constructions and results from Section~\ref{ssdmlr},
it is possible to develop an analogous theory of cyclic bases for 
semi-discrete MLRs of differential-difference equations. 
To this end, instead of~\er{cbzcr}, 
for a given MLR $\big(\mm(\nt,\ud,\la),\,\lwt(\nt,\ud,\la)\big)$ one can use the matrices
\begin{gather*}
\nb_{\mm}^\ki\big(\eo{\uv^j}(\mm)\big),\qquad\quad j=1,\ldots,\diunw,\qquad\ki\in\zz,
\end{gather*}
defined by means of~\er{nbmq},~\er{foreo}. This will be described elsewhere.
\end{remark}

\section*{Acknowledgments}

The author would like to thank S.~Konstantinou-Rizos and A.V.~Mikhailov for useful discussions.

This work was supported by the Russian Science Foundation (grant No. 25-21-00454, 
\url{https://rscf.ru/en/project/25-21-00454/} ).

\end{document}